\DeclareMathOperator{\Mod}{ mod}
\DeclareSymbolFont{rsfscript}{OMS}{rsfs}{m}{n}
\DeclareSymbolFontAlphabet{\mathrsfs}{rsfscript}
\begin{document}

\title{Slowly Synchronizing Automata with Zero\\ and Incomplete Sets}

\author{Elena V. Pribavkina}

\institute{Ural State University, 620083 Ekaterinburg, Russia\\
\email{elena.pribavkina@usu.ru}}

\maketitle

\begin{abstract} Using combinatorial properties of incomplete sets in a free monoid we construct a series of $n$-state deterministic automata with zero whose shortest synchronizing word has length
$\frac{n^2}4+\frac{n}2-1$.

\bigskip
KEYWORDS: synchronizing automata, shortest synchronizing words, automata with zero, incomplete sets, incompletable words.
\end{abstract}

\section{Introduction}
Recall that a deterministic finite automaton
 $\mathrsfs{A}=\langle Q,A,\delta\rangle$ is defined by specifying a finite \emph{state set} $Q$, an \emph{input alphabet} $A$, and a \emph{transition function} $\delta: Q\times A\to
Q$. The function $\delta$ naturally
extends to the free monoid $A^{*}$; this extension is still
denoted by $\delta$.
An automaton $\mathrsfs{A}=\langle Q,A,\delta\rangle$ is said to be
\emph{synchronizing} (or \emph{reset}) if there is a \emph{synchronizing} (\emph{reset}) word, that is a word $w\in A^*$ which takes all the states of $\mathrsfs A$
to a particular one: $\delta(q,w)=\delta(q',w)$ for all $q,q'\in
Q$.

Reset automata turn out to have various applications in different fields such as model-based testing of reactive systems, robotics, dna-computing, symbolic dynamics. In view of the applications an important question is about the length of the \emph{shortest} reset word for a given synchronizing automaton. This issue has been widely studied over the past forty years,
especially in connection with the famous \v{C}ern\'{y} conjecture \cite{Ce64} which states that any $n$-state synchronizing automaton possesses a synchronizing word of length at most $(n-1)^2$.
 This conjecture has been proved for a large number of classes of synchronizing automata, nevertheless in general it remains one of the most longstanding open problems in automata theory. For more details see the surveys \cite{MaSa99,Sa05,Vo_Survey}. It is known (see for example \cite[Proposition~3]{Vo_CIAA07}) that the proof of the \v{C}ern\'{y}'s conjecture reduces to proving it in two particular cases: for reset automata whose underlying graph is strongly-connected, i.\,e.\ each state is reachable from any other one, and for reset automata with \emph{zero}, i.\,e.\ with a particular state $0$ such that  $\delta(0,a)=0$ for any $a\in A$. Thus obtaining bounds for the maximal length of shortest synchronizing words for the class of \emph{reset automata with zero} is a rather natural and interesting problem.

 It is clear that any synchronizing automaton with zero possesses a unique zero state, and any synchronizing word brings the automaton in the zero state. A rather simple argument shows that the length of a synchronizing word for a given $n$-state reset automaton with zero is at most $\frac{n(n-1)}2$, see e.\,g. \cite{Ryst}. This bound is tight, since for each $n$ there is an $n$-state reset automaton with zero and $n-1$ input letters whose shortest reset word has length $\frac{n(n-1)}2$ (Fig.~\ref{fig_mart}).

 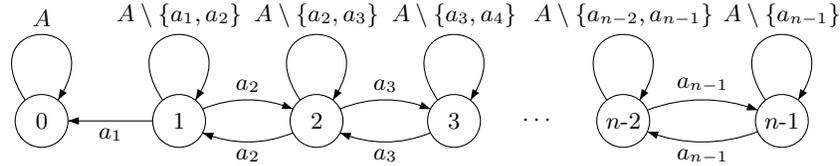
\begin{figure}[ht]
 \label{fig_mart}
\begin{center}
  \unitlength=4pt
    \begin{picture}(70,15)(1,-5)
    \gasset{Nw=5,Nh=5,Nmr=2.5}
    \thinlines
    \node(A0)(0,0){$0$}
    \node(A1)(13,0){$1$}
    \node(A2)(26,0){$2$}
    \node(A3)(39,0){$3$}
    \node[Nframe=n](A4)(47,0){$\cdots$}
    \node(A5)(55,0){$n$-$2$}
    \node(A6)(70,0){$n$-$1$}
    \drawedge(A1,A0){$a_1$}
    \drawedge[curvedepth=2](A2,A1){$a_2$}
    \drawedge[curvedepth=2](A1,A2){$a_2$}
    \drawedge[curvedepth=2](A2,A3){$a_3$}
    \drawedge[curvedepth=2](A3,A2){$a_3$}
    \drawedge[curvedepth=2](A5,A6){$a_{n-1}$}
    \drawedge[curvedepth=2](A6,A5){$a_{n-1}$}
    \drawloop(A0){$A$}
    \drawloop(A1){$A\setminus\{a_1,a_2\}$}
    \drawloop(A2){$A\setminus\{a_2,a_3\}$}
    \drawloop(A3){$A\setminus\{a_3,a_4\}$}
    \drawloop(A5){$A\setminus\{a_{n-2},a_{n-1}\}$}
    \drawloop(A6){$A\setminus\{a_{n-1}\}$}
    \end{picture}
 \end{center}  \caption{An $n$-state reset automaton with zero over an $n-1$-lettered alphabet whose shortest reset word is of length $\frac{n(n-1)}{2}$.}
\end{figure}

An essential feature of the example in Fig.~\ref{fig_mart} is that the input alphabet
size grows with the number of states. This contrasts with the aforementioned
examples due to \v{C}ern\'{y} \cite{Ce64} in which the alphabet does not depend on
the state number. Thus a natural question is to determine
the maximum length $c_m(n)$ of shortest reset words for $n$-state synchronizing
automata with zero over a fixed $m$-lettered input alphabet as a function of $n$.

Recently in \cite{Mart} with the help of computer experiments P. Martjugin found a series of  $n$-state automata with zero over a binary alphabet
whose shortest reset words have length $\frac{n^2}4+o(n^2)$. The main result of \cite{Mart} can be stated as follows:

\begin{theorem}
\label{mart} For each integer $n\ge8$ there exists a synchronizing $n$-state automaton with zero over a binary alphabet
whose shortest synchronizing word has length
$\left\lceil\frac{n^2+6n-16}{4}\right\rceil.$
\end{theorem}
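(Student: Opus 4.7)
The plan is to exhibit an explicit family of $n$-state automata $\mathrsfs{A}_n=\langle Q_n,\{a,b\},\delta_n\rangle$ with zero over a binary alphabet and to establish matching upper and lower bounds of $\lceil(n^2+6n-16)/4\rceil$ on the length of their shortest reset words. Because the minimal reset word must grow quadratically from only two letters, the construction I have in mind has the letter $a$ acting as a long cycle (or near-permutation) on $Q_n\setminus\{0\}$, while the second letter $b$ absorbs only one or two carefully chosen states into $0$ and otherwise behaves as a near-identity, so that each ``chunk'' of progress towards the fully synchronized configuration costs linearly many letters.

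For the upper bound I would write down a candidate reset word of the shape $w_n=b\,a^{k_1}b\,a^{k_2}\cdots b\,a^{k_s}$ in which each block $b\,a^{k_i}$ maneuvers one more state into the unique position from which $b$ absorbs it into $0$. The exponents $k_i$ should form an arithmetic-progression-like sequence dictated by the geometry of the cycle, so that the total length $s+\sum_i k_i$ telescopes to exactly $\lceil(n^2+6n-16)/4\rceil$. Verifying that $w_n$ really synchronizes $\mathrsfs{A}_n$ then becomes a direct finite simulation on the concrete family.

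The lower bound is the heart of the argument. I would track, for every word $u\in\{a,b\}^*$, the image $\delta_n(Q_n,u)$ and introduce a potential function $\Phi(S)$ that sums, over each state $q\in S$, the minimum number of letters required to transport $q$ to $0$ from its current position on the cycle. The pivotal lemma would state that a single letter decreases $\Phi$ by at most a small constant, except at the applications of $b$ that genuinely absorb a new state; since only $|Q_n|-1$ such absorptions can occur along a reset word, summing the per-letter decrease forces $|w|\ge\Phi(Q_n)$, whose explicit evaluation yields the claimed length.

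The principal difficulty is ruling out clever interleavings of $a$ and $b$ that make simultaneous progress on several ``live'' states at once. Here the combinatorics of incomplete sets, hinted at in the paper's title, should be decisive: the set of words that fail to drive a given subset into $0$ must be an incomplete subset of $\{a,b\}^*$, and known quantitative bounds on incompletable words supply the quadratic lower estimate. Matching these bounds tightly so as to recover the additive $6n-16$ correction, which most likely reflects boundary effects at the first and last stages of the synchronization process, is the main technical hurdle I would need to overcome.
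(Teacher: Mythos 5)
This statement is quoted by the paper from Martjugin's article \cite{Mart} and is given \emph{without proof} here; the paper only remarks that the construction behind it is non-trivial, was found with the help of computer experiments, and does not extend to larger alphabets. So the only question is whether your proposal stands on its own, and it does not: it is a research plan, not a proof. The entire content of the theorem is the existence of a \emph{specific} family of binary automata realizing the exact value $\left\lceil(n^2+6n-16)/4\right\rceil$, and your text never exhibits one. The transition function of $\mathrsfs{A}_n$, the exponents $k_i$ in the candidate reset word, the potential function $\Phi$, and the ``pivotal lemma'' are all placeholders. Without the concrete automaton there is nothing to simulate for the upper bound and nothing to analyze for the lower bound, and the additive term $6n-16$ cannot be recovered from generalities about cycles and absorbing letters.

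Two further points. First, the lower-bound scheme as described is too weak: a potential that sums, over states $q$ in the current image, the distance from $q$ to $0$ generically yields only a linear bound once the image has shrunk to a few states; the quadratic behaviour comes from showing that the $i$-th shrinking step must cost on the order of $i$ letters, which requires an invariant on \emph{where} the surviving states sit after each absorption -- exactly the ``clever interleavings'' issue you defer rather than resolve. Second, your appeal to incomplete sets misstates the connection used in this paper: here, for $X=A^k\setminus\{u\}$ with $u$ unbordered, incompletable words in $X^*$ are identified with reset words of the completed semi-flower automaton $\widehat{\mathrsfs{F}}(X)$, and that machinery produces the author's own series with shortest reset word of length $\frac{n^2}4+\frac{n}2-1$ (Theorem~\ref{theorem_main}) -- a \emph{different} formula from Martjugin's. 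Known bounds on incompletable words therefore cannot supply the exact constant in Theorem~\ref{mart}; they would at best reprove a $\frac{n^2}{4}+O(n)$ statement, not the one claimed.
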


Note that the construction from \cite{Mart} is not trivial and beside that, it can not be extended on larger alphabets. Let us explain what we mean by such an extension. We say that a synchronizing automaton $\mathrsfs{A}=\langle Q, A,
\delta\rangle$ is \emph{proper}, if each letter of the alphabet $A$
appears in each word synchronizing this automaton. Putting this another way, each letter is essential for synchronization of such an automaton. Naturally, it is the class of proper automata for which the problem of estimation of the function $c_m(n)$ should be considered, but adding new letters to the automata from Theorem~\ref{mart} violates this property.

The main result of the present paper is the following

\begin{theorem}
\label{theorem_main} Let $A$ be an alphabet with $|A|\ge2$.
For any integer $k>|A|$ there is a proper synchronizing automaton with zero and
 $n=2k$ states over the alphabet $A$ whose shortest synchronizing word has length
$$\frac{n^2}4+\frac{n}2-1.$$
\end{theorem}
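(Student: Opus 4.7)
The plan is to exhibit an explicit family of automata achieving the bound and then to verify both a matching upper bound and a matching lower bound for the length of the shortest reset word. I would begin by describing the construction on the state set $Q = \{0, 1, \ldots, 2k-1\}$ with zero state $0$. The transitions would be arranged so that one distinguished letter $a \in A$ performs a ``hard'' cyclic shift on a subset of $k$ states arranged as a cycle, a second letter performs the unique state-merge that ultimately drives a specific state into $0$, and the remaining letters of $A$ (which can be accommodated only because the hypothesis $k > |A|$ leaves enough room to spread their effects) are each given a single essential role — rotating a sub-chain, or redirecting one specific state — so that every letter is indispensable.

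For the upper bound I would construct an explicit synchronizing word $w$ of length $k^2 + k - 1 = \tfrac{n^2}{4} + \tfrac{n}{2} - 1$. The natural candidate has the form $(u_1 u_2 \cdots u_t)^{k} v$, where the periodic block $u_1 \cdots u_t$ uses each letter of $A$ exactly the right number of times, and $v$ is a short suffix. The invariant to check by induction is that after the $j$th iteration of the block the image of $Q$ under the word read so far has shrunk by exactly one additional state, while the suffix $v$ absorbs the last few remaining states into $0$.

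The central obstacle is the lower bound, for which I would invoke the combinatorial theory of incomplete sets promised in the abstract. The strategy is to associate with the automaton a finite set $X \subseteq A^*$ that is \emph{incomplete} in the sense that there are arbitrarily long words over $A$ none of whose factors lie in $X$ and which cannot be extended to contain a factor in $X$, and then to show: (i) any word that merges two particular states into $0$ must contain some specific factor belonging to $X$, and (ii) consecutive such factors must be separated by ``gap'' subwords of length at least $k$, because shorter gaps would violate the incompletability property. Summing $k$ gaps of length $k$ with the constant overhead coming from the initial merge and the final collapse into $0$ would yield exactly the claimed bound $k^2 + k - 1$. The delicate point will be to prove that these gap segments cannot overlap and that each separates a genuine new merge event, so the counts multiply cleanly.

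Finally, properness is a by-product: the explicit synchronizing word uses every letter, and conversely the lower-bound analysis shows that omitting any letter breaks one of the $X$-avoidance conditions required for synchronization, so no reset word can avoid any letter of $A$.
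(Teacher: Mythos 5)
Your proposal correctly identifies the general flavour of the paper's argument (a $2k$-state automaton with zero tied to an incomplete set, a reset word of length $k^2+k-1$, and a matching lower bound coming from combinatorics of words), but as written it has genuine gaps in every one of its three parts, and the way you propose to fill them diverges from what actually works. First, the construction is never pinned down, and the design you sketch --- each letter of $A$ assigned ``a single essential role'' such as rotating a sub-chain or redirecting one state --- is essentially the philosophy of the $(n-1)$-letter example in Fig.~1, which the paper is explicitly trying to get away from; the actual construction takes $X=A^k\setminus\{u\}$ for an \emph{unbordered} word $u$ of length $k$ containing every letter of $A$ (this is where the hypothesis $k>|A|$ is used), and builds the completed semi-flower automaton $\widehat{\mathrsfs{F}}(k,u)$ recognizing $X^*$. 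In that automaton properness is not achieved by giving letters individual roles: it follows because every reset word must contain $u$ as a factor and $u$ contains every letter. Second, your definition of an incomplete set is not the right one (incompleteness means some word is not a factor of \emph{any} word of $X^*$, not a statement about extendability), and the bridge you need --- that a word synchronizes the automaton if and only if it is incompletable in $X^*$ --- is asserted nowhere in your plan, although it is the lemma that converts both your upper and lower bounds into a single combinatorial statement about $X$.

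Third, and most seriously, the lower-bound accounting is structurally wrong. You count ``$k$ gaps of length at least $k$'' separating short distinguished factors, but nothing in your setup forces a gap to have length $\ge k$, and you give no mechanism for the non-overlapping claim you yourself flag as delicate. The correct count is dual to yours: an incompletable word for $A^k\setminus\{u\}$ must be of the form $v_0uv_1u\cdots uv_muv_{m+1}$, i.e.\ it contains $m+1$ occurrences of the \emph{long} factor $u$ (each of length $k$), and these occurrences cannot overlap precisely because $u$ is unbordered --- a hypothesis your proposal never invokes. The bound $m\ge k-1$, together with the fact that at least $k-1$ of the separators $v_i$ must be nonempty, comes from a modular-arithmetic criterion: the word is incompletable iff the partial sums $|v_1|,|v_1|+|v_2|,\dots$ realize all nonzero residues modulo $k$. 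This yields $|w|\ge k\cdot k+(k-1)=k^2+k-1$, matched by the explicit word $(ua)^{k-1}u$. Without the unborderedness of $u$, the incompletable/synchronizing equivalence, and the residue criterion, your lower bound does not close, so the proposal as it stands is a plausible outline rather than a proof.
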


Our construction leads to the same growth rate of the length of the shortest reset word as the construction from~\cite{Mart}, but essentially differs from it since there are no limitations on the number of input letters. Another important feature is that is was found not by a brute force, but as a result of analysis of interrelations between synchronizing automata with zero and incomplete sets in free monoids. Recall that complete and incomplete sets play a significant role in combinatorics on words and theory of codes in connection with the notion of a maximal code (see e.\,g.\ \cite{Rest}). We think that the relation between combinatorial objects of different nature (codes and automata) that we establish in the present paper is of a self-dependent interest. Recently such a relation has been independently discovered by Rampersad and Shallit in \cite{RamSha09} where the computational complexity of some universality problems is studied. In particular the authors obtained a result similar to our Proposition \ref{incompl_synchr} and a partial result of our Proposition \ref{incompl_short}.

The paper is organized as follows. In Section~\ref{incomplete} we introduce the notion of an incomplete set
and study properties of incompletable words necessary for the proof of the main result. In Section~\ref{semiflower} from a finite set of words
$X$ we build an automaton $\widehat{\mathrsfs{F}}(X)$ with zero, recognizing the monoid $X^*$. In Section~\ref{main} we establish the equality between words incompletable in $X^*$, and words synchronizing the constructed automaton $\widehat{\mathrsfs{F}}(X)$, which is used for the proof of the main Theorem~\ref{theorem_main}.

\section{Incomplete Sets}
\label{incomplete}

To fix the notation let us recall the main definitions from combinatorics on words.
By $|w|$ we denote the \emph{length} of the word $w$, the length of the \emph{empty word} $\lambda$ is equal to zero: $|\lambda|=0$.
By $A^+$ we denote the set of all non-empty words over the alphabet $A$,
and by $A^k$ -- the set of all words of length $k$ over $A$.
A word $u\in A^+$ is a \emph{factor} of $w$
(\emph{prefix} or \emph{suffix} respectively), if $w$ can be decomposed as
 $w=xuy$ ($w=uy$ or $w=xu$ respectively)
for some $x,y\in A^*$. A factor (prefix, suffix) $u$ of
$w$ is called \emph{proper} if $u\ne w.$
Given a word $u=a_1a_2\cdots a_n\in A^+$ by $u[i\ldots j]$ with
$1\le i,j\le n$ we denote the factor $a_ia_{i+1}\cdots a_{j}$
if $i\le j$, and the empty word if $i>j$. Moreover, we put $u[0]=\lambda$.
A word $u\in A^*$ is called \emph{unbordered} if none of its proper prefixes is its suffix.

Let $X$ be a finite set of words over the alphabet $A$. A word
$w\in A^*$ is said to be \emph{completable} in $X^*$ if $w$ is a factor
of some word in $X^*$ (to put this another way, $w$ can be ``covered" by elements from $X$), otherwise $w$ is \emph{incompletable}. We say that the set $X$ is
\emph{complete} if any word over $A$ is completable in $X^*$, otherwise the set $X$ is said to be
\emph{incomplete}.
Some properties of compete and incomplete sets were studied in \cite{Rest}.
Here we use the following result from \cite{Rest}:

\begin{proposition}
\label{restivo}
 Let $X\subseteq A^*$, $k=\max\limits_{x\in X}|x|$, and there is a word $u$ of length $k$ such that no element of $X$ is a factor of $u$. Then the set $X$ is incomplete, and the word
$w=(ua)^{k-1}u$ for an arbitrary letter $a$ is incompletable.
\end{proposition}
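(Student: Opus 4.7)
The plan is to argue by contradiction: suppose $w=(ua)^{k-1}u$ is a factor of a product $y_1y_2\cdots y_n$ with each $y_i\in X$, and derive an impossibility. (Incompleteness of $X$ is then automatic, since $w$ witnesses it.) I will analyze the \emph{cuts}, i.e.\ the positions inside $w$ at which one $y_i$ ends and the next begins.

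The structural backbone of $w$ is its decomposition into $k$ blocks $B_1,\ldots,B_k$ (each a copy of $u$), separated by the $k-1$ letters $a$. My first step is to show that every block $B_j$ must contain a cut strictly in its interior. If not, then some single $y_i$ covers all of $B_j$; since $|y_i|\le k=|B_j|$, either $y_i$ equals $u$ (impossible, as $u$ would then be an element of $X$ and a factor of itself) or $y_i$ extends outside $B_j$ and has length at least $k+1$ (also impossible). Consequently there are at least $k$ cuts, and hence at least $k-1$ \emph{middle pieces}---factors of $w$ lying between two consecutive cuts, each being a complete element of $X$.

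The second step is a counting argument. A middle piece entirely contained in a single block would be a factor of $u$ lying in $X$, contradicting the hypothesis; hence each middle piece must cross at least one of the $k-1$ boundaries between consecutive blocks. Since these boundaries are at distance $k+1$ from one another and each middle piece has length at most $k$, no middle piece can cross more than one boundary. With only $k-1$ boundaries available, there are exactly $k-1$ middle pieces and exactly $k$ cuts, one per block $B_j$.

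The final step extracts a contradiction. Write the $j$-th middle piece as $M_j=v_j\,a\,w_j$, where $v_j$ is a suffix of $u$ of length $\ell_j\ge1$ (the portion in $B_j$) and $w_j$ is a prefix of $u$ of length $r_j\ge1$ (the portion in $B_{j+1}$). Since $M_j$ and $M_{j+1}$ share the unique cut in $B_{j+1}$, a direct computation of positions yields the identity $\ell_{j+1}+r_j=k$. Combined with $|M_j|=\ell_j+r_j+1\le k$ this gives $\ell_{j+1}\ge\ell_j+1$ for $j=1,\ldots,k-2$. Telescoping from $\ell_1\ge1$ yields $\ell_{k-1}\ge k-1$, so the inequality $\ell_{k-1}+r_{k-1}\le k-1$ forces $r_{k-1}\le0$, contradicting $r_{k-1}\ge1$. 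I expect the main conceptual hurdle to be the ``no interior cut in $B_j$ forces $u\in X$'' observation together with the shared-cut identity $\ell_{j+1}+r_j=k$; the rest is careful bookkeeping that I expect to require care but no new ideas.
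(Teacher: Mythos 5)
The paper does not actually prove this proposition---it is imported verbatim from Restivo's paper \cite{Rest} and used as a black box---so there is no in-paper proof to compare against line by line. Judged on its own, your cut-counting argument is correct and self-contained, and it works in the full generality of the statement (arbitrary $X$ with $\max_{x\in X}|x|=k$). Step 1 (every block $B_j$ must contain a cut strictly in its interior, since otherwise a single $y_i$ of length at most $k$ covers the length-$k$ block and forces $y_i=u\in X$) is sound; the counting in Step 2 is sound provided ``crosses a boundary'' is read as ``contains one of the $k-1$ separator letters $a$'': a priori a middle piece could consist of a separator together with letters from only one adjacent block (or even be the single letter $a$), but containment of a separator is all the pigeonhole argument needs, and the stronger normal form $M_j=v_jaw_j$ with $\ell_j,r_j\ge1$ is then legitimately recovered in Step 3 from the fact that all $k$ cuts are interior to blocks. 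The telescoping $\ell_{j+1}=k-r_j\ge\ell_j+1$ and the final contradiction $r_{k-1}\le0$ are correct. (The degenerate case $k=1$ is settled already at Step 1, where a length-one block admits no interior cut; and one should tacitly discard empty factors $y_i$ if $\lambda\in X$.)

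For what it is worth, the paper's own machinery gives an alternative route to the \emph{special case} it actually needs: for $X=A^k\setminus\{u\}$ with $u$ unbordered, Lemma~\ref{incompl_crit} characterizes incompletable words via forbidden-position sets $S_j$ and $k$-representativity of $\{|v_1|,|v_1|+|v_2|,\dots\}$, and $(ua)^{k-1}u$ satisfies that criterion since the partial sums $1,2,\dots,k-1$ realize all nonzero residues modulo $k$. That approach buys a complete if-and-only-if characterization (needed later for the lower bound in Proposition~\ref{incompl_short}) but only for sets of the special form $A^k\setminus\{u\}$; your argument proves only the one incompletability claim but does so for arbitrary $X$, which is the actual content of Proposition~\ref{restivo}.
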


Next we consider the sets of the form $X=A^k\setminus\{u\}$,
where $u$ is some unbordered word and $k\ge 2$. The Proposition~\ref{restivo} implies the following

\begin{corollary}
\label{cor_rest} Any set of the form $X=A^k\setminus\{u\}$ with
$k\ge2$ and $u\in A^k$ is incomplete.

\end{corollary}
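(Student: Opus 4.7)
The plan is to invoke Proposition~\ref{restivo} with the word $u$ itself playing the role of the length-$k$ witness. All elements of $X = A^k \setminus \{u\}$ have length exactly $k$, so the parameter $\max_{x \in X} |x|$ in the statement of Proposition~\ref{restivo} equals $k$, matching the length of $u$. Thus I just need to verify the hypothesis that no element of $X$ occurs as a factor of $u$.

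This verification is immediate: any factor of $u$ of length $k$ must coincide with $u$ (since $|u|=k$), and $u \notin X$ by construction; no element of $X$ has length less than $k$ either, so $u$ contains no element of $X$ as a factor at all. Proposition~\ref{restivo} then applies and yields both conclusions we might want: the set $X$ is incomplete, and for any letter $a \in A$ the word $(ua)^{k-1}u$ is incompletable in $X^*$.

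There is essentially no obstacle here; the corollary is a direct specialization. The only thing worth flagging is that the statement makes no use of the unborderedness of $u$ announced in the paragraph preceding the corollary, so that hypothesis is not needed for incompleteness itself and will presumably be exploited only in later sections (e.g., to control how the incompletable word $(ua)^{k-1}u$ interacts with the automaton $\widehat{\mathrsfs{F}}(X)$ to be constructed in Section~\ref{semiflower}).
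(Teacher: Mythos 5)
Your proof is correct and is exactly the argument the paper intends: the corollary is stated as a direct consequence of Proposition~\ref{restivo}, applied with $u$ itself as the length-$k$ witness containing no element of $X$ as a factor. Your side remark that unborderedness is not needed for incompleteness is also accurate.
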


\begin{lemma}
\label{incompl_form} Let $X=A^k\setminus\{u\}$ for some unbordered word word $u$. Any incompletable word in $X^*$ is of the form
$v_0uv_1u\cdots uv_muv_{m+1}$ with $v_i\in A^*\setminus A^*uA^*$,
$m>0$.

\end{lemma}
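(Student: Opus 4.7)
My plan is to reduce the lemma to an occurrence-count statement and then handle the reduction by a padding argument that exploits the unborderedness of $u$. Since $u$ is unbordered, no two occurrences of $u$ in any word can overlap nontrivially: an overlap of length $s \in [1, k-1]$ between two occurrences would force $u[1\ldots s] = u[k-s+1\ldots k]$, a proper nonempty border. Consequently, the occurrences of $u$ inside $w$ are pairwise disjoint, and marking all of them produces a canonical factorization
$$w = v_0 u v_1 u \cdots u v_t u v_{t+1}$$
with $v_i \in A^* \setminus A^* u A^*$ and $t + 1$ equal to the total number of occurrences of $u$ in $w$. The lemma then amounts to $t \geq 1$, and I would prove the contrapositive: if $w$ contains at most one occurrence of $u$, then $w$ is completable in $X^*$, contradicting incompletability.

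To establish completability I would construct padding $x, y \in A^*$ with $xwy \in X^*$. Setting $r = |x| \in \{0, \ldots, k-1\}$ and choosing $|y|$ minimally so that $|xwy|$ is divisible by $k$, the natural left-to-right partition of $xwy$ into length-$k$ blocks slices $w$ at positions $k - r, 2k - r, 3k - r, \ldots$. If $w$ contains no occurrence of $u$ at all, $r = 0$ already works: every block fully inside $w$ is a length-$k$ factor of $w$ and so is not $u$, while the final block, if it extends into $y$, can be completed to avoid $u$ by a suitable choice among the $|A| \geq 2$ letters.

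The substantive case is a single occurrence of $u$ in $w$ at some position $p + 1$. I would pick $r \in \{0, \ldots, k-1\}$ with $r + p \not\equiv 0 \pmod k$; such a residue exists since $k \geq 2$. Under this choice the occurrence is split across two consecutive blocks: one ends with a prefix $u[1\ldots s]$ of $u$ and the next begins with the complementary suffix $u[s+1\ldots k]$, for some $1 \leq s \leq k - 1$. Unborderedness is the decisive ingredient here: equality of either block with $u$ would force a proper nonempty border of $u$ of length $s$ or $k - s$, contradicting the hypothesis. Every other interior block is a length-$k$ factor of $w$ whose starting position, by the congruence constraint on $r$, cannot coincide with the unique occurrence of $u$, so none of them is $u$ either.

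The remaining point, and the main bookkeeping obstacle, is the pair of boundary blocks that may involve padding. If such a boundary block happens to coincide with one of the two split blocks catching the $u$-occurrence, the border argument already applies and rules out equality with $u$ regardless of the padding. Otherwise the boundary block has the form $x \cdot \pi$ or $\sigma \cdot y$, where $\pi$ is a prefix and $\sigma$ a suffix of $w$ disjoint from the unique $u$-occurrence; the freedom afforded by $|A| \geq 2$ combined with a nonempty padding segment lets me pick $x$ (or $y$) different from the corresponding prefix (or suffix) of $u$, so that no collision with $u$ occurs. Together these choices yield $xwy \in X^*$, contradicting the assumed incompletability of $w$.
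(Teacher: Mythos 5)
Your proof is correct, and although it shares the paper's overall skeleton (non-overlapping occurrences via unborderedness, then the contrapositive: at most one occurrence implies completable, established by cutting a padded word $xwy$ into length-$k$ blocks), the crucial single-occurrence case is handled by a genuinely different device. The paper writes $w=w'uw''$ and cuts the occurrence after its first letter: it observes that $w'u[1]$ and $u[2\ldots k]w''$ are both $u$-free (an occurrence of $u$ ending at the appended $u[1]$ or beginning inside $u[2\ldots k]$ would produce a proper border), applies the zero-occurrence padding argument to each piece separately, and concatenates $yw'u[1]$ with $u[2\ldots k]w''z$ to get $ywz\in X^*$. You instead keep $w$ whole and choose the global offset $r$ of the block partition so that the unique occurrence straddles a block boundary, then check blocks one by one: the two straddling blocks are excluded by unborderedness, the remaining interior blocks by uniqueness of the occurrence, and the boundary blocks by the freedom in the padding letters. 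The paper's reduction buys brevity --- it reuses the $u$-free case verbatim, fixes your splitting parameter at $s=1$, and avoids your boundary-block bookkeeping --- whereas your direct version makes the role of each hypothesis explicit and is actually more careful at one point: in the zero-occurrence case the paper asserts every block $x_i$ of $wz$ differs from $u$ because $w$ is $u$-free, yet the last block may end inside $z$ and need not be a factor of $w$; your explicit appeal to $|A|\ge2$ to steer that final block away from $u$ closes this small gap.
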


\begin{proof}
Let $w$ be an incompletable word in $X^*$, then $u$ appears as a factor in $w$.
Indeed, if it is not the case, consider the shortest word $z\in A^*$ such that the length of the word $wz$ is a multiple of $k$. Then $wz$ can be decomposed as $wz=x_1x_2\cdots x_s$ with $x_1,x_2,\ldots,x_s\in A^k$ (Fig.~2). Since $w$ does not contain $u$ as a factor
for all $i$ we have $x_i\ne u$, hence $x_1,x_2,\ldots,x_s\in
A^k\setminus\{u\}$ and the word $w$ is completable in $X^*$, which is a contradiction.

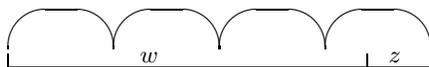
\begin{figure}[ht]
\begin{center}
\label{fig_inc}
  \unitlength=1pt
 \begin{picture}(160,15)(0,10)
 \put(0,0){\line(1,0){160}}
 \put(50,2){$w$}
 \put(136,0){\line(0,1){5}}
 \put(144,2){$z$}
 \put(0,0){\line(0,1){5}}
 \put(160,0){\line(0,1){5}}
 \put(20,7){\oval(40,30)[t]}
 \put(60,7){\oval(40,30)[t]}
 \put(100,7){\oval(40,30)[t]}
 \put(140,7){\oval(40,30)[t]}
 \end{picture}
 \end{center}
  \caption{The word $w$ does not contain $u$ as a factor.}
\end{figure}

Suppose now that $u$ appears just once as a factor of $w$, i.\,e.\ $w=w'uw''$ and $w',w''\in A^*\setminus A^*uA^*$. Then the word $u[2\ldots k]w''$ does not contain
$u$ as a factor since $u$ is unbordered. From the previous argument we deduce that there is a word $z\in A^*$ such that $u[2\ldots k]w''z\in X^*$.
In the same way, $w'u[1]$ does not contain $u$ as a factor, hence
there is a word $y\in A^*$ such that  $yw'u[1]\in X^*$ (Fig.~3). Thus $ywz\in X^*$, i.\,e.\ also in this case $w$ is completable in $X^*$. We come to a contradiction.

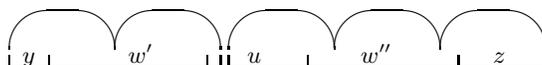
\begin{figure}[ht]
\begin{center}
\label{fig_3}
 \unitlength=1pt
 \begin{picture}(203,15)(0,10)
 \put(0,0){\line(1,0){80}}
 \put(83,0){\line(1,0){120}}
 \put(45,2){$w'$}
 \put(5,2){$y$}
 \put(15,0){\line(0,1){5}}
 \put(203,0){\line(0,1){5}}
 \put(75,0){\line(0,1){5}}
 \put(113,0){\line(0,1){5}}
 \put(183,2){$z$}
 \put(90,2){$u$}
 \put(133,2){$w''$}
 \put(0,0){\line(0,1){5}}
 \put(83,0){\line(0,1){5}}
 \put(80,0){\line(0,1){5}}
 \put(170,0){\line(0,1){5}}
 \put(20,7){\oval(40,30)[t]}
 \put(60,7){\oval(40,30)[t]}
 \put(103,7){\oval(40,30)[t]}
 \put(143,7){\oval(40,30)[t]}
 \put(183,7){\oval(40,30)[t]}
 \end{picture}

 \end{center}
  \caption{The word $w$ contains $u$ as a factor only once.}
\end{figure}

Therefore, the factor $u$ occurs in $w$ at least twice. Moreover, since  $u$ is unbordered these occurrences do not overlap, and  $w$ can be represented as $w=v_0uv_1u\cdots
v_muv_{m+1}$, where no $v_i$ contains $u$ as a factor, and $m>0$.
\end{proof}

Next we give a necessary and sufficient condition for a word of the form
 $w=v_0uv_1u\cdots v_muv_{m+1}$ to be incompletable. To this end we introduce some auxiliary notions.

A position $0\le i<k$ in the $j$-th occurrence of the word $u$ in $w$ is called
\emph{forbidden} if $u[i+1\ldots k]v_ju\cdots v_muv_{m+1}z$ does not belong
to $X^*$ for any $z\in A^*$. For the $j$-th occurrence of the word  $u$ in $w$, where $j$ ranges from $1$ to $m$, by $S_j$ we denote the \emph{set of forbidden positions}. Note that $S_j\subseteq\{0,1,\ldots k-1\}$ for all $j$'s.
A simple observation is the following

\begin{remark}\label{S1} A word $w=v_0uv_1u\cdots v_muv_{m+1}$ is incompletable in $X^*$ if and only if $S_1=\{0,1,\ldots, k-1\}$.
\end{remark}

A set $S=\{s_1,s_2\ldots,s_m\}\subseteq\mathbb{N}$ is called
 \emph{$k$-representative} if among the residues modulo $k$ of its elements there are all possible positive residues modulo $k$:
$$\{\bar{1},\bar{2},\ldots,\overline{k-1}\}\subseteq\{s_1\Mod k,s_2\Mod k,\ldots,s_m\Mod k\}.$$

Now we are ready to proof the criterion for a word $w$ to be incompletable.

\begin{lemma}
\label{incompl_crit}
Let $X=A^k\setminus\{u\}$ for some unbordered word $u$. A word $w=v_0uv_1uv_2\cdots
uv_muv_{m+1}$ with $v_i\in A^*\setminus A^*uA^*$, $m>0$ is incompletable in $X^*$ if and only if the set
$$\{|v_1|,\ |v_1|+|v_2|,\ldots,\ |v_1|+|v_2|+\cdots+|v_m|\}$$ is $k$-representative.
\end{lemma}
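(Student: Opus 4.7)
The plan is to invoke Remark~\ref{S1} and give an explicit description of $S_1$ in terms of the lengths $|v_i|$. Fix any $j$ and $i\in\{0,1,\ldots,k-1\}$; the word remaining to be covered when we cut into the $j$-th occurrence of $u$ at position $i$ is
$$W \;=\; u[i+1\ldots k]\, v_j\, u\, v_{j+1}\, u\cdots u\, v_{m+1}\, z,$$
and $W\in X^*$ iff its partition into consecutive blocks of length $k$ contains no block equal to $u$. Since $z$ is a free parameter, it can always be chosen so that no block of $W$ meeting $z$ equals $u$; thus the only obstruction to completability is a length-$k$ block lying \emph{entirely} inside the fixed prefix $u[i+1\ldots k]\,v_j\,u\cdots u\,v_{m+1}$ that equals $u$.

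The crucial step is the following unborderedness argument: such a length-$k$ block (at position $tk$ in $W$) can equal $u$ only if it coincides exactly with one of the full $u$-occurrences $u_{j+1},\ldots,u_{m+1}$ of $w$, with the single additional possibility that $i=0$, in which case the opening block $W[1\ldots k]$ is itself the full $u_j$. Indeed, a block lying entirely inside some $v_r$ cannot equal $u$, and a block straddling a boundary between a $u$-occurrence and an adjacent $v_r$---or straddling two successive $u$-occurrences across an empty intermediate $v_r$---would force $u$ to have a nontrivial proper border, contradicting unborderedness of $u$.

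Granting this observation, the modular condition follows. The $(j+l)$-th $u$-occurrence, for $l=1,\ldots,m+1-j$, sits at position
$$p_{j+l}\;=\;lk - i + |v_j| + |v_{j+1}| + \cdots + |v_{j+l-1}|$$
in $W$, and its alignment with a block boundary ($p_{j+l}\equiv 0\pmod k$) is equivalent to $|v_j|+\cdots+|v_{j+l-1}|\equiv i\pmod k$. Combining this with the automatic membership $0\in S_j$, one obtains
$$S_j \;=\; \{0\}\,\cup\,\Big\{\,(|v_j|+\cdots+|v_{j+l-1}|)\Mod k\;:\;l=1,\ldots,m+1-j\,\Big\}.$$
Specializing to $j=1$ and invoking Remark~\ref{S1}, the condition $S_1=\{0,1,\ldots,k-1\}$ translates exactly into the $k$-representativity of $\{|v_1|,\,|v_1|+|v_2|,\,\ldots,\,|v_1|+\cdots+|v_m|\}$.

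The main obstacle is the unborderedness argument of the second paragraph: one has to enumerate each way a length-$k$ block can fail to coincide with an existing full $u$-occurrence---including the subtler case where an empty intermediate $v_r$ allows a single block to span two consecutive $u$'s---and in each case extract the resulting proper border of $u$. Once that observation is in hand, the translation to partial sums modulo $k$ is routine bookkeeping.
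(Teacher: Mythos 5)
Your proof is correct and follows essentially the same route as the paper: both reduce to Remark~\ref{S1} and compute the forbidden-position sets $S_j$ explicitly as $\{0\}$ together with the partial sums $|v_j|+\cdots+|v_{j+l-1}| \Mod k$, using unborderedness of $u$ to rule out any length-$k$ block of the factorization equalling $u$ other than an aligned full occurrence. The only difference is organizational --- you derive $S_j$ directly from the block-alignment positions $p_{j+l}$, whereas the paper obtains the same formula by backward induction from $S_{m+1}=\{0\}$ --- and this does not change the substance of the argument.
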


\begin{proof}
Consider a word $w=v_0uv_1uv_2\cdots uv_muv_{m+1}$. Since the set  $X$ contains only words of length $k$, for all $j$
we have $0\in S_j$. Moreover, since no proper prefix of $u$ is its suffix we have
$$S_{m+1}=\{0\}.$$ Further, it is easy to see that if the set
 $S_j$ is already defined, then a non-zero position $i$ belongs to $S_{j-1}$
if and only if $$u[i+1\ldots k]v_{j-1}u[1\ldots \ell]\in
X^*,\ \ell\in S_j.$$
Therefore $0\ne i\in S_m$ if and only if
$u[i+1\ldots k]v_m\in X^*$, i.\,e.\ the length of the factor $u[i+1\ldots
k]v_m$ is divisible by $k$, thus $i\equiv |v_m|\Mod k$. We get
$$S_m=\{0, |v_m|\Mod k\}.$$ Next suppose we have already calculated
\begin{equation}\label{Sj}
S_{j}=\{0,
|v_j|\Mod k, (|v_j|+|v_{j+1}|)\Mod k,\ldots, (|v_j|+\cdots+|v_m|)\Mod
k\}\end{equation}
for $0<j\le m$. Let us prove that
\begin{equation}
\label{Sj1}
\begin{array}{l}
  S_{j-1}=\{0, |v_{j-1}|\Mod k,
(|v_{j-1}|+|v_j|)\Mod k, \ldots, \\
\phantom{*}\hfill(|v_{j-1}|+|v_j|+\cdots+|v_m|)\Mod k\}.
 \end{array}
\end{equation}
Let $0<i<k$. Since $X=A^k\setminus\{u\}$, and
$u$ is unbordered, there is an integer $\ell$ such that $0\le
\ell<k$ and $u[i+1\ldots k]v_{j-1}u[1\ldots \ell]\in X^*$, i.\,e.\
the length of this factor is divisible by $k$, hence
$$i\equiv(|v_{j-1}|+\ell)\Mod k.$$ Since $0\ne i\in S_{j-1}$
if and only if $\ell\in S_{j}$, using \eqref{Sj} we obtain
получаем \eqref{Sj1}.

Therefore, $$S_1=\{0, |v_1|\Mod k, (|v_1|+|v_2|)\Mod k, \ldots,
(|v_1|+\cdots+|v_m|)\Mod k\}.$$ Taking into account Remark~\ref{S1} we conclude the proof.
\end{proof}

\begin{proposition}
\label{incompl_short}
 The shortest incompletable word for the set
$X=A^k\setminus\{u\}$, where $u$ is an unbordered word, has length
$k^2+k-1$.
\end{proposition}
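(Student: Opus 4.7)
The plan is to combine Lemmas \ref{incompl_form} and \ref{incompl_crit}: first derive a lower bound on the length of any incompletable word, and then exhibit a matching construction.

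For the lower bound, let $w$ be incompletable. By Lemma \ref{incompl_form} we may write $w=v_0uv_1u\cdots uv_muv_{m+1}$ with $v_i\in A^*\setminus A^*uA^*$ and $m\ge 1$, so
\[|w| = (m+1)k + \sum_{i=0}^{m+1}|v_i|.\]
By Lemma \ref{incompl_crit}, the partial sums $P_j=|v_1|+\cdots+|v_j|$ for $j=1,\ldots,m$ form a $k$-representative set, which must therefore contain all $k-1$ nonzero residues modulo $k$. In particular at least $k-1$ of the $P_j$'s are distinct and positive. Since $(P_j)$ is nondecreasing, this forces $m\ge k-1$; moreover $P_m$ is at least as large as the biggest of $k-1$ distinct positive integers, hence $P_m\ge k-1$. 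Substituting into the length formula yields $|w| \ge k\cdot k + (k-1) = k^2+k-1$.

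For the upper bound I would take any letter $a\in A$ and set
\[w = u\,a\,u\,a\,u\cdots u\,a\,u \quad (\text{$k$ copies of $u$, $k-1$ copies of $a$}),\]
of length $k\cdot k + (k-1) = k^2+k-1$. With $v_0 = v_k = \lambda$ and $v_i = a$ for $1\le i\le k-1$, each $v_i$ has length at most $1<k$ and is therefore $u$-free, while the partial sums $1,2,\ldots,k-1$ cover every nonzero residue modulo $k$. Lemma \ref{incompl_crit} then immediately certifies that $w$ is incompletable.

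The main difficulty lies in the lower bound: one has to extract from $k$-representativity \emph{both} inequalities $m\ge k-1$ and $P_m\ge k-1$, being careful that a $|v_i|$ can legitimately be zero (so the sequence $(P_j)$ is only weakly increasing and residue repetitions are possible). Once those two inequalities are in hand they balance the $u$-contribution $(m+1)k$ against the $v_i$-contribution in the length formula and produce exactly $k^2+k-1$; the construction and its verification then reduce to one direct application of Lemma \ref{incompl_crit}.
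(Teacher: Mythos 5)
Your proof is correct and follows essentially the same route as the paper: the witness $(ua)^{k-1}u$ certified by Lemma~\ref{incompl_crit}, and the lower bound obtained from Lemmas~\ref{incompl_form} and~\ref{incompl_crit} by forcing $m\ge k-1$ occurrences beyond the first and a total $v_i$-contribution of at least $k-1$. Your phrasing via the partial sums $P_j$ (rather than the paper's count of non-empty $v_i$'s) is an equivalent and, if anything, slightly more explicit way to handle the possibility of empty blocks.
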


\begin{proof} Proposition~\ref{restivo} (or Lemma~\ref{incompl_crit}) implies that the word
$w=(ua)^{k-1}u$ is incomletable in $X^*$ for any $a\in A$. Such a word has length $k^2+k-1$.

Let $w'$ be the shortest incompletable word for the set $X^*$.
Then by Lemma~\ref{incompl_form} it has the form $v_0uv_1\ldots
v_muv_{m+1}$ with $v_i\in A^*\setminus A^*uA^*$ and $m>0$.
Note that $m\ge k-1$ and among $v_i$'s ($1\le i\le m$) there should be at least $k-1$ non-empty words (otherwise the set
$\{|v_1|,|v_1|+|v_2|,\ldots, |v_1|+\cdots+|v_m|\}$ is not $k$-representative, and by Lemma~\ref{incompl_crit} the word $w$
can be completed). Thus, $|w'|\ge k-1+k^2=|w|.$

\section{Construction of the Automaton from the set $X$}
\label{semiflower}

In this Section we consider finite automata as devices for recognizing languages. Recall that to this end we choose an \emph{initial state} $q_0$ and a set $F$ of \emph{terminal states}. An automaton
$\mathrsfs{A}=\langle Q,A,\delta,q_0,F\rangle$ is said to \emph{recognize}
a language $L\subseteq A^*$ if
$$L=\{w\in A^*\mid \delta(q_0,w)\in F\}.$$

Let $X$ be a finite set of words. By $\mathrsfs{F}(X)$ we denote an automaton recognizing the monoid  $X^*$, in which the initial state  $1$ is also the only terminal one, all the cycles pass through this state, and moreover, all the words from $X$ label all possible simple cycles. Such automata are known as \emph{semi-flower automata}, see e.\,g.\ \cite{GiamRest}.
An example of a semi-flower automaton for the set
$X=\{aa,ab,ba,bb,aab\}$ is presented on Fig.~4.

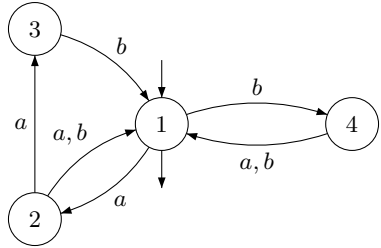
\begin{figure}[ht]
 \label{fig_flo1}
 \begin{center}
   \unitlength=4pt
    \begin{picture}(30,18)(0,0)
    \gasset{Nw=5,Nh=5,Nmr=2.5}
    \thinlines
    \node[Nmarks=if,iangle=90,fangle=-90](A1)(12,9){$1$}
    \node(A2)(0,0){$2$}
    \node(A3)(0,18){$3$}
    \node(A4)(30,9){$4$}
    \drawedge[curvedepth=2](A1,A2){$a$}
    \drawedge(A2,A3){$a$}
    \drawedge[curvedepth=2](A1,A4){$b$}
    \drawedge[curvedepth=2](A4,A1){$a,b$}
    \drawedge[curvedepth=2](A3,A1){$b$}
    \drawedge[curvedepth=2](A2,A1){$a,b$}
    \end{picture}
 \end{center}
  \caption{The semi-flower automaton for $X^*$, $X=\{aa,ab,ba,bb,aab\}$.}
\end{figure}

In general a semi-flower automaton is nondeterministic.
Recall that for such an automaton the deterministic transition function $\delta: Q\times A\to Q$ is replaced with the function $\delta: Q\times A\to 2^Q$. Note that this definition does not exclude the possibility
$\delta(q,a)=\varnothing$ for some $q\in Q$ and $a\in A$
(for instance, for the automaton in Fig.~4 we have
$\delta(3,a)=\varnothing$). This function as in the case of deterministic automata, can be extended to the free monoid $A^*$: if $w=au$, $a\in A$, $u\in A^+$ then we put
$$\delta(q,au)=\bigcup\limits_{t\in\delta(q,a)}\delta(t,u).$$

Any semi-flower automaton can be completed by adding a new state $0$ and putting all previously undefined transitions to be equal $0$ (Fig.~5). Moreover, put $\delta(0,a)=0$ for all $a\in A$. The automaton obtained in this way is denoted by $\widehat{\mathrsfs{F}}(X)$.

\begin{figure}[ht]
\label{fig_flo2}
 \begin{center}
\unitlength=4pt
    \begin{picture}(30,20)(0,0)
    \gasset{Nw=5,Nh=5,Nmr=2.5}
    \thinlines
    \node[Nmarks=if,iangle=90,fangle=-90](A1)(12,9){$1$}
    \node(A2)(0,0){$2$}
    \node(A3)(0,18){$3$}
    \node(A4)(30,9){$4$}
    \drawedge[curvedepth=2](A1,A2){$a$}
    \drawedge(A2,A3){$a$}
    \drawedge[curvedepth=2](A1,A4){$b$}
    \drawedge[curvedepth=2](A4,A1){$a,b$}
    \drawedge[curvedepth=2](A3,A1){$b$}
    \drawedge[curvedepth=2](A2,A1){$a,b$}
    \node(A0)(30,18){$0$}
    \drawedge[curvedepth=2](A3,A0){$a$}
    \drawloop[loopangle=0](A0){$a,b$}
    \end{picture}
  \end{center}
\caption{The automaton $\mathrsfs{\widehat{F}}(X)$ for
$X=\{aa,ab,ba,bb,aab\}$.}
\end{figure}

In particular case when $X=A^k\setminus\{u\}$, the automaton
$\mathrsfs{\widehat F}(X)$ will be denoted by
$\mathrsfs{\widehat{F}}(k,u)$.

\section{Incomplete Sets and Synchronizing Automata}
\label{main}

The notion of synchronization of an automaton can be extended for the case of nondeterministic automata in different ways. Here we use one of such extensions which is known as \emph{strong synchronization} \cite{Ito04}.

A nondeterministic finite automaton $\mathrsfs{A}=\langle
Q,A,\delta\rangle$ is said to be \emph{synchronizing} if there is a word $w\in A^*$ and a state $q\in Q$ such that
$\delta(q',w)=\{q\}$ for any $q'\in Q$. Putting this another way, all possible paths labeled by the word $w$ from an arbitrary state of $\mathrsfs{A}$ lead to the particular state $q$.

The following proposition connects the notions of a word incompletable in
$X^*$ and a synchronizing word for the completed semi-flower automaton $\mathrsfs{\widehat{F}}(X)$.

\begin{proposition}
\label{incompl_synchr}
Given an incomplete set $X$, the word $w$ is incompletable in  $X^*$ if and only if $w$ is synchronizing for
the automaton $\mathrsfs{\widehat{F}}(X)$.
\end{proposition}

\begin{proof}
By the definition of $\mathrsfs{F}(X)$ the fact that the word $w$ is not a factor of any word in $X^*$ means that it can not be read from any state of this automaton. Equivalently, this word brings to the state $0$ any state of the automaton $\mathrsfs{\widehat{F}}(X)$.
\end{proof}

Now we construct the automaton $\mathrsfs{\widehat{F}}(k,u)=\langle
Q,A,\delta,1,\{1\}\rangle$ for $k\ge2$  and the unbordered word
$u=a_1a_2\cdots a_k$.

This automaton has $n=2k$ states $Q=\{0,1,\ldots,2k-1\}$, and the transitions are defined as follows (see Fig. 6).

For $1\le i\le k-1$ we put

$\delta(i,a_i)=i+1;$

$\delta(i,b)=k+i$ for all $b\in A\setminus\{a_i\}$.

$\delta(k,a_k)=0$;

$\delta(k,b)=1$ for all $b\in A\setminus\{a_k\}$.

For $k+1\le i\le 2k-2$ we put

$\delta(i,a)=i+1$ for all $a\in A$;

$\delta(2k-1,a)=1$ for all $a\in A$;

$\delta(0,a)=0$ for all $a\in A$.

\begin{figure}[ht]
\label{fig_flodet}
 \begin{center}
  \unitlength=4pt
    \begin{picture}(60,35)(0,0)
    \gasset{Nw=6,Nh=6,Nmr=3}
    \thinlines
    \node[Nmarks=if](A1)(0,20){$1$}
    \node(A0)(60,20){$0$}
    \drawloop[loopangle=0](A0){$A$}
    \node(A2)(12,30){$2$}
    \node(AK1)(12,10){$k$+$1$}
    \node(A3)(25,30){$3$}
    \node(AK2)(25,10){$k$+$2$}
    \node[Nw=7,Nframe=n](A4)(37,30){$\ldots$}
    \node[Nw=7,Nframe=n](AK3)(37,10){$\ldots$}
    \node(AK)(50,30){$k$}
    \node(A2K1)(50,10){$2k$-$1$}
    \drawedge(AK,A0){$a_k$}
    \drawedge[ELside=r](A1,A2){$a_1$}
    \drawedge(A1,AK1){$A\setminus a_1$}
    \drawedge(A2,AK2){$A\setminus a_2$}
    \drawedge(A2,A3){$a_2$}
    \drawedge(AK1,AK2){$A$}
    \drawedge(A3,AK3){$A\setminus a_3$}
    \drawedge(A3,A4){$a_3$}
    \drawedge(AK2,AK3){$A$}
    \drawqbpedge(A2K1,200,A1,-90){$A$}
    \drawqbpedge[ELside=r,syo=2](AK,160,A1,90){$A\setminus a_k$}
    \drawedge(A4,AK){$a_{k-1}$}
    \drawedge(AK3,A2K1){$A$}
    \end{picture}

 \end{center}
  \caption{The automaton $\mathrsfs{\widehat{F}}(k,u)$ for $k\ge2$ and $u=a_1a_2\cdots a_k$.}
\end{figure}

Note that in this case the automaton $\mathrsfs{\widehat{F}}(k,u)$
is deterministic; moreover by proposition~\ref{incompl_synchr} the word $w=(ua)^{k-1}u$ is its shortest reset word. Therefore it holds the following

\begin{proposition}
\label{prop_main} Given an unbordered word $u$ of length $k\ge2$ and
$X=A^k\setminus\{u\}$, the shortest synchronizing word for the deterministic automaton $\mathrsfs{\widehat{F}}(k,u)$ with zero has length $k^2+k-1$.
\end{proposition}

\begin{proposition} \label{prop_proper} Let $k>|A|$, and let $u$ be an unbordered word of length $k$ containing all the letters of the alphabet $A$. Then the automaton $\mathrsfs{\widehat{F}}(k,u)$ is proper.
\end{proposition}

\begin{proof}
Lemma~\ref{incompl_form} implies that the word $u$ is a factor of any incompletable word for $A^k\setminus\{u\}$,
hence by Proposition~\ref{incompl_synchr} $u$ is a factor of any reset word for the automaton
$\mathrsfs{\widehat{F}}(k,u)$. Since every letter of the alphabet $A$ occurs in
 $u$, then every letter occurs in each reset word for
 $\mathrsfs{\widehat{F}}(k,u)$, thus by definition this automaton is proper.
\end{proof}

Propositions~\ref{prop_main} and  \ref{prop_proper} imply the main theorem~\ref{theorem_main}.

\end{proof}

\end{document}